\def\CP{\mathcal{P}}
\newtheorem{theorem}{Theorem}
\newtheorem{protocol}{Protocol}
\newtheorem{definition}{Definition}
\newtheorem{lemma}{Lemma}
\newtheorem{prop}{Properties}
\def\bits{\{0,1\}}
\newcommand{\Prob}[1]{{\Pr\left[\,{#1}\,\right]}}
\begin{document}

\title{On the Oblivious Transfer Capacity of Generalized Erasure Channels against Malicious Adversaries}

\author{Rafael~Dowsley$^1$ and Anderson~C.~A.~Nascimento$^2$\\
\\
$^1$ Institute of Theoretical Informatics\\
Karlsruhe Institute of Technology\\ 
rafael.dowsley@kit.edu\\
\\
$^2$ Institute of Technology\\
University of Washington Tacoma\\
andclay@uw.edu.
}

\markboth{}%
{}
\maketitle

\begin{abstract} Noisy channels are a powerful resource for cryptography as they can be used to obtain
information-theoretically secure key agreement, commitment and oblivious transfer protocols, among others.
Oblivious transfer (OT) is a fundamental primitive since it is complete for secure multi-party computation, and 
the OT capacity characterizes how efficiently a channel can be used for obtaining 
string oblivious transfer. Ahlswede and Csisz\'{a}r (\emph{ISIT'07}) presented upper and lower bounds on the
OT capacity of generalized erasure channels (GEC) against passive adversaries. In the case of GEC 
with erasure probability at least 1/2, the upper and lower bounds match and therefore the OT capacity
was determined. It was later proved by Pinto et al. (\emph{IEEE Trans. Inf. Theory 57(8)}) that in this case there is 
also a protocol against malicious adversaries achieving the same lower bound, and hence the OT capacity is 
identical for passive and malicious adversaries. In the case of GEC with erasure probability smaller than 1/2, 
the known lower bound against passive adversaries that was established by Ahlswede and Csisz\'{a}r does not 
match their upper bound and it was unknown whether this OT rate could be achieved against malicious 
adversaries as well. In this work we show that there is a protocol against malicious adversaries achieving the same 
OT rate that was obtained against passive adversaries.

In order to obtain our results we introduce a novel use of interactive hashing that is suitable for dealing with the case of low erasure probability ($p^* <1/2$).
\end{abstract}

\textbf{Keywords:} Oblivious transfer, generalized erasure channel, oblivious transfer capacity, malicious adversaries, information-theoretic security.
\maketitle

\section{Introduction} The usefulness of noisy channels for cryptographic purposes was first 
realized by Wyner \cite{BLTJ:Wyner75}, who proposed a secret key agreement protocol based on noisy channels. 
Later on it was showed by Cr\'{e}peau and Kilian that such channels can also be used to obtain information-theoretically secure 
implementations of cryptographic primitives such as oblivious transfer and commitment protocols~\cite{FOCS:CreKil88,EC:Crepeau97}.

Oblivious transfer (OT) is one of the fundamental cryptographic primitives since it is complete for two-party and multi-party 
computation~\cite{STOC:GolMicWig87,STOC:Kilian88,C:CreVanTap95}, i.e., given an implementation of OT it is possible to securely evaluate any polynomial 
time computable function without any additional assumptions. In the early years of research on OT, different variants of OT were
proposed~\cite{Wiesner83,TR:Rabin81}, but it was later showed that they are equivalent~\cite{C:Crepeau87}. Thereafter the community has 
focused mainly on the one-out-of-two string oblivious transfer variant, which is the one considered in this work. It is a primitive involving
two parts, Alice and Bob. Alice inputs two strings $S_{0},S_{1}\in \bits^{k}$ and Bob inputs a choice bit $c$. Bob receives as output 
$S_c$. The security of the OT protocol guarantees that (a dishonest) Alice cannot learn $c$, while (a dishonest) Bob cannot learn
both strings. The results of Cr\'{e}peau and Kilian~\cite{FOCS:CreKil88,EC:Crepeau97} regarding OT based on noisy channels 
were later improved in~\cite{KorMor01,ISIT:SteWol02,SCN:CreMorWol04,IEEEIT:NasWin08}.

\paragraph*{OT Capacity} After the initial success in obtaining OT protocols from noisy channels, researchers started to 
investigate the question of which channels can be used to implement OT and how efficiently this can be done. 
Nascimento and Winter~\cite{IEEEIT:NasWin08} proposed the notion of OT capacity, which is the 
optimal rate at which noisy channels can employed to realize OT, and also determined which noise resources 
have strictly positive OT capacity. Imai et al.~\cite{ISIT:ImaMorNas06} obtained the OT capacity of erasures channels 
against passive adversaries (i.e., adversaries which always follow the protocol) and a lower bound on its OT capacity against malicious adversaries 
(which can arbitrarily deviate from the protocol). Ahlswede and Csisz\'ar~\cite{ISIT:AhlCsi07,AhlCsi13} showed new bounds for the 
OT capacity of erasure channels.

\paragraph*{Generalized Erasure Channel} A generalized erasure channel (GEC) is a combination of a discrete memoryless 
channel and an erasure channel. The output of each transmission is an erasure with probability $p^*>0$,
independently from the input symbol. GECs represent a very special case for the study of OT based on noisy channels.
In fact, the known techniques to implement OT from noisy channels first use the noisy channel to emulate a GEC (in case that it is 
not already one) and then use the (emulated) GEC in the rest of the protocol. Thus, clarifying the OT capacity of the generalized
erasure channels is a central question.

Ahlswede and Csisz\'ar~\cite{ISIT:AhlCsi07,AhlCsi13} investigated the OT capacity of GECs against passive adversaries. 
For a GEC with $p^* \geq 1/2$, they determined the OT capacity. For a GEC with $p^* < 1/2$, 
they obtained upper and lower bounds for the OT capacity. Of course, the upper bounds also hold for the case of malicious adversaries.
Pinto et al.~\cite{IEEEIT:PDMN11} proved that for a GEC with $p^* \geq 1/2$, the OT rate achieved by
Ahlswede and Csisz\'ar's protocol against passive adversaries can also be achieved against malicious adversaries, and so the OT capacity
is the same. The techniques used in~\cite{IEEEIT:PDMN11} clearly do not apply in the case $p^*<1/2$ as they explicitly use the fact that the majority of the symbols received by Alice are erasures. 

\paragraph*{Our contribution} 

In this work we prove that for a GEC with $p^* < 1/2$, the same OT rate achieved by Ahlswede and Csisz\'ar's protocol~\cite{ISIT:AhlCsi07,AhlCsi13}
in the case of passive adversaries can also be achieved in the case of malicious adversaries, thus establishing a lower bound 
on the OT capacity of these GECs against malicious participants that is equal to one obtained against passive ones. We introduce a novel use of the interactive hashing techniques used by Cr\'{e}peau and Savvides in \cite{EC:CreSav06}.

\section{Preliminaries}\label{sec:pre}

\subsection{Notation} 
Domains of random variables and other sets will be denoted by calligraphic letters, 
the cardinality of a set $\mathcal{X}$ by $|\mathcal{X}|$, random variables by upper case letters, and 
realizations of the random variables by lower case letters. For a random variable $X$ over $\mathcal{X}$, 
$P_X: \mathcal{X} \rightarrow [0,1]$ with $\sum_{x \in \mathcal{X}} P_X(x) =1$ denotes its probability distribution.
For a joint probability distribution $P_{XY}: \mathcal{X} \times \mathcal{Y} \rightarrow [0,1]$, $P_X(x) := \sum_{y \in \mathcal{Y}}P_{XY}(x,y)$
denotes the marginal probability distribution and $P_{X|Y}(x|y):=\frac{P_{XY}(x,y)}{P_Y(y)}$ the conditional probability distribution
if $P_Y(y) \neq 0$. $X\in_R\mathcal{X}$ denotes a random variable uniformly distributed over $\mathcal{X}$ and $U_r$
a vector uniformly chosen from $\bits^r$. $[n]$ denotes the set $\{1,...,n\}$ and $[n] \choose \ell$ the 
set of all subsets $\mathcal{S} \subseteq [n]$, where $|\mathcal{S}| = \ell$. For $X^n=(X_1,X_2, \ldots, X_n)$ and 
$\mathcal{S} \subset [n]$, $X^{\mathcal{S}}$ is the restriction of $X^n$ to the positions in the subset $\mathcal{S}$.
Similarly for a set $\mathcal{R}$, $\mathcal{R}^{\mathcal{S}}$ is the subset of $\mathcal{R}$ consisting of 
the elements determined by $\mathcal{S}$. If $a$ and $b$ are two bit strings of the same dimension, $a \oplus b$ 
denotes their bitwise XOR. The logarithms used in this paper are in base 2. The entropy of $X$ is denoted by 
$H(X)$ and the mutual information between $X$ and $Y$ by $I(X;Y)$.

\subsection{Entropy and Extractors}

The main entropy measure used in this work is the min-entropy since its conditional version
captures the notion of unpredictability of a random variable, i.e., the private randomness 
that can be extracted from variable $X$ given the correlated random variable $Y$ possessed by an adversary.
For a finite alphabet $\mathcal{X}$, the min-entropy of a random variable $X \in \mathcal{X}$ is defined as
$$H_\infty(X)=\min\limits_{x}\log(1/P_{X}(x)).$$ Its conditional version, for a finite alphabet $\mathcal{Y}$ 
and a random variable $Y \in \mathcal{Y}$, is defined as
$$H_\infty(X|Y)=\min\limits_{y}H_\infty(X|Y=y).$$

For two probability distributions $P_X$ and $P_Y$ over the same domain $\mathcal{V}$, the statistical distance between
them is $$\mathsf{SD}(P_X,P_Y) := \frac{1}{2} \sum_{v \in \mathcal{V}} |P_X(v) - P_Y(v)|.$$

In order to extract secure one-time pads from random variables we use
strong extractors~\cite{NisZuc96,EC:DodReySmi04,DORS08}. 

\begin{definition}[Strong Extractors]
A probabilistic polynomial time function $\mathsf{Ext}: \{0,1\}^n \times \bits^r \rightarrow \{0,1\}^\ell$ 
using $r$ bits of randomness is a $(n,m,\ell,\epsilon)\mathrm{-strong}$  $\mathrm{extractor}$
if for all probability distributions $P_X$ with $\mathcal{X} = \{0,1\}^n$ and such 
that $H_\infty(X) \geq m$, we have that $\mathsf{SD}(P_{\mathsf{Ext}(X;U_r), U_r},P_{U_\ell,U_r}) \leq \epsilon$. 
\end{definition}

In particular we will use Universal Hash Functions~\cite{CarWeg79} as strong extractors since they can extract the 
optimal number of nearly random bits~\cite{RadTaS00} according to the Leftover-Hash Lemma (similarly the Privacy-Amplification Lemma)
~\cite{STOC:ImpLevLub89,HILL99,BBM88,IEEEIT:BBCM95}.

\begin{definition}[Universal Hash Function]
A class $\mathcal{G}$ of functions $g:\mathcal{X} \rightarrow \mathcal{Y}$ is \emph{2-universal} if, for any distinct $x_1,x_2 \in \mathcal{X}$, 
the probability that $g(x_1)=g(x_2)$ is at most $|\mathcal{Y}|^{-1}$ when $g$ is chosen uniformly at random from $\mathcal{G}$.
\end{definition} 

\begin{lemma} Let $\mathcal{G}$ be a 2-universal class of functions $g: \{0,1\}^n \rightarrow \{0,1\}^\ell$.
Then for $G$ chosen uniformly at random from $\mathcal{G}$ we have that
\[
\mathsf{SD}(P_{G(X),G},P_{U_\ell,G}) \leq \frac{1}{2} \sqrt{2^{-H_\infty(X)}2^\ell}.
\]
In particular, it is a $(n,m,\ell,\epsilon)\mathrm{-strong}$  $\mathrm{extractor}$ when $\ell \leq m-2\log(\epsilon^{-1})+2$.
\end{lemma}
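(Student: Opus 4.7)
The plan is to follow the classical two-step proof of the Leftover Hash Lemma: first I would bound the statistical distance to uniform by the collision probability (a standard $L^1$-to-$L^2$ passage via Cauchy--Schwarz), and then I would evaluate the collision probability of the joint variable $(G(X),G)$ using the $2$-universality of $\mathcal{G}$.

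For the first step I would establish the auxiliary inequality that, for any random variable $Z$ on a finite set $\mathcal{Z}$,
\[ \mathsf{SD}(P_Z,\,U_{\mathcal{Z}}) \;\leq\; \tfrac{1}{2}\sqrt{|\mathcal{Z}|\cdot \mathrm{Col}(Z) - 1},\]
where $\mathrm{Col}(Z) := \sum_{z}P_Z(z)^2$ is the collision probability. This follows by applying Cauchy--Schwarz to $\sum_z |P_Z(z) - 1/|\mathcal{Z}||$ and expanding the square $\sum_z(P_Z(z)-1/|\mathcal{Z}|)^2$, where the linear cross term vanishes because $\sum_z(P_Z(z) - 1/|\mathcal{Z}|) = 0$. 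I would then apply this with $Z = (G(X),G)$, whose support has size $|\mathcal{Z}| = 2^\ell|\mathcal{G}|$.

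For the second step I would rewrite $\mathrm{Col}(Z)$ as a collision probability over two independent copies $(X_1,G_1)$ and $(X_2,G_2)$. Independence of the hash choices gives a factor $\Pr[G_1 = G_2] = 1/|\mathcal{G}|$, and it remains to bound $\Pr[G(X_1) = G(X_2)]$ when $X_1,X_2$ are i.i.d.\ from $P_X$ and $G$ is uniform in $\mathcal{G}$ and independent of them. Splitting on the event $X_1 = X_2$, the diagonal contributes $\sum_x P_X(x)^2 \leq 2^{-H_\infty(X)}$ (using the standard estimate $\mathrm{Col}(X) \leq \max_x P_X(x)\cdot\sum_x P_X(x)$), while the off-diagonal part is controlled by $2$-universality and contributes at most $2^{-\ell}$. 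Combining,
\[ \mathrm{Col}(Z) \;\leq\; \frac{1}{|\mathcal{G}|}\bigl(2^{-H_\infty(X)} + 2^{-\ell}\bigr),\]
so that $|\mathcal{Z}|\cdot \mathrm{Col}(Z) - 1 \leq 2^\ell\cdot 2^{-H_\infty(X)} + 1 - 1 = 2^{\ell - H_\infty(X)}$. Plugging this into the bound of the first step produces exactly $\tfrac{1}{2}\sqrt{2^{-H_\infty(X)}2^\ell}$.

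The ``in particular'' clause is then a one-line algebraic rearrangement: the condition $\tfrac{1}{2}\sqrt{2^{\ell-m}} \leq \epsilon$ is equivalent to $\ell \leq m - 2\log(\epsilon^{-1}) + 2$, which is the stated bound. No individual step is genuinely hard; the one most likely to cause momentary confusion on a first reading is the Cauchy--Schwarz passage, which quietly exploits the vanishing first-order correction to produce the crisp $|\mathcal{Z}|\mathrm{Col}(Z)-1$ form (rather than a looser $|\mathcal{Z}|\mathrm{Col}(Z)$) and is ultimately what makes the uniform term $2^\ell\cdot 2^{-\ell}=1$ cancel against the $-1$, leaving only the $2^{\ell - H_\infty(X)}$ slack.
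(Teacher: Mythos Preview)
Your proof is correct and is the standard collision-probability argument for the Leftover Hash Lemma. The paper itself does not give a proof of this lemma: it states it as a known result and cites \cite{STOC:ImpLevLub89,HILL99,BBM88,IEEEIT:BBCM95}, so there is no in-paper proof to compare against; your write-up supplies exactly the classical derivation those references contain.
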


\subsection{Interactive Hashing and Encoding of Subsets} \label{sec:IH}

The oblivious transfer protocol introduced in this paper uses interactive hashing as an important building block.
Interactive hashing is a cryptographic primitive between two players, the sender (Bob) and the receiver (Alice)
which was initially introduced in the context of computationally secure cryptography~\cite{OstVenYun93} but was
later on generalized for the context of information-theoretical cryptography. It is particularly useful in the 
design of unconditionally secure oblivious transfer protocols~\cite{FOCS:CacCreMar98,TCC:DHRS04,JC:DHRS07,
EC:CreSav06,IEEEIT:PDMN11,ISIT:DowLacNas14}. In this primitive Bob inputs a string $W\in\bits^m$ and both 
Alice and Bob receive as output two strings $W_0,W_1\in\bits^m$ such that $W_0 \neq W_1$. The first requirement 
is that one of the two output strings, $W_d$, should be equal to $W$. The second requirement is that one of the strings should be 
effectively beyond the control of (a malicious) Bob. On the other hand, the third requirement states that (a malicious) Alice should
not be able to learn $d$ (as long as $W_0$ and $W_1$ are a priori equally likely to be the input).

\begin{definition}[Security of Interactive Hashing~\cite{TCC:DHRS04,JC:DHRS07}]
An interactive hashing protocol is \emph{secure for Bob} if for every unbounded strategy of Alice 
($A'$), and every $W$, if $W_0$, $W_1$ are the outputs of the protocol between an honest Bob ($B$)
with input $W$ and $A'$, then $$\{View_{A'}^{\langle A', B\rangle}(W)|W=W_0\}=\{View_{A'}^{\langle A',B\rangle}(W)|W=W_1\},$$  
where $View_{A'}^{\langle A', B\rangle}(W)$ 
is Alice's view of the protocol when Bob's input is $W$. An interactive hashing protocol is 
($s, \rho$)-\emph{secure for Alice} if for every $S \subseteq \bits^m$ of size at most $2^s$ and every 
unbounded strategy of Bob ($B'$), if $W_0$, $W_1$ are the outputs of the protocol, then
\[
\mathrm{Pr}[W_0,W_1 \in S] < \rho,
\]
where the probability is taken over the coin tosses of Alice and Bob. An interactive hashing protocol is 
($s, \rho$)-\emph{secure} if it is secure for Bob and ($s, \rho$)-\emph{secure for Alice}.
\end{definition}

If the distribution of the string $W_{\bar{d}}$ over the randomness of the two parties is $\eta$-close 
to uniform on all strings not equal to $W_d$, then the protocol is called {\em $\eta$-uniform} 
interactive hashing.

\begin{lemma}[\cite{TCC:DHRS04,JC:DHRS07}]
\label{lem:interactive-hashing-ding}
Let $t, m$ be positive integers such that $t \geq \log m + 2$. Then there
exists a four-message $(2^{-m})$-uniform $(t,2^{-(m-t)+O(\log m)})$-secure 
interactive hashing protocol.
\end{lemma}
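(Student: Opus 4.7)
The plan is to construct a protocol based on two rounds of batched random 2-universal hashing, and to analyze its security via combinatorial arguments on the fibers of the hash functions. In the first round, Alice sends (message 1) a random hash $h_1 : \bits^m \to \bits^{t + c \log m}$ from a 2-universal family, for a constant $c$ to be tuned; Bob replies (message 2) with $y_1 = h_1(W)$. In the second round, Alice sends (message 3) a random 2-universal hash $h_2 : \bits^m \to \bits^{m - 1 - t - c \log m}$ and Bob replies (message 4) with $y_2 = h_2(W)$. Both parties compute the (generically exactly two) preimages of $(y_1, y_2)$ under the combined map $h = (h_1, h_2)$, producing the outputs $W_0, W_1$ in canonical lexicographic order.

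\emph{Security for Bob and uniformity.} The transcript $(h_1, y_1, h_2, y_2)$ is a deterministic function of Alice's coins and Bob's responses, and both $W_0$ and $W_1$ yield the same values $(y_1, y_2)$ under $(h_1, h_2)$ by construction; therefore swapping Bob's input between $W_0$ and $W_1$ leaves the transcript unchanged, which gives perfect security for Bob. The $2^{-m}$-uniformity of $W_{\bar d}$ follows from the symmetry of the uniformly random hash $h$ under the exchange $W_d \leftrightarrow W_{\bar d}$: conditioned on the fiber through $W_d$ containing exactly one other preimage, that preimage is uniform over $\bits^m \setminus \{W_d\}$ up to negligible error arising from conditioning on full rank.

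\emph{Security for Alice.} Fix $S \subseteq \bits^m$ with $|S| \leq 2^t$. A malicious Bob succeeds iff the preimage of $(y_1, y_2)$ under $h$ lies entirely in $S$, equivalently iff there is a collision in $S$ under $h$. Letting $M(h_1) = \max_{y_1} |S \cap h_1^{-1}(y_1)|$, the two-phase structure combined with 2-universality of $h_2$ gives
\[
\Pr[\text{Bob succeeds}] \leq \mathbb{E}_{h_1}\!\left[\binom{M(h_1)}{2}\right] \cdot 2^{-(m - 1 - t - c \log m)}.
\]
A direct 2-universality calculation for $h_1$ yields $\mathbb{E}\bigl[\sum_{y_1} |S \cap h_1^{-1}(y_1)|^2\bigr] \leq |S|(1 + m^{-c})$, which controls the total squared pair-count but not directly the maximum fiber size $M(h_1)$.

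\emph{Main obstacle.} The technical heart is sharpening the bound on $\mathbb{E}_{h_1}[M(h_1)^2]$ from the trivial $O(|S|)$ to $m^{O(1)}$, since a loose bound here costs a factor of $2^t$ and pushes the final estimate to $2^{2t-m+O(\log m)}$ rather than $2^{-(m-t)+O(\log m)}$. The plan is either (i) to draw $h_1$ from a hash family with $O(\log m)$-wise independence, so that $M(h_1)$ concentrates around its mean of order $1$ per fiber via standard tail bounds; or (ii) to argue, using the hypothesis $t \geq \log m + 2$, that only a negligible fraction of choices of $h_1$ admit any fiber with large intersection with $S$, via a union bound exploiting the linearity of the hash. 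Either route yields $\mathbb{E}[M(h_1)^2] = m^{O(1)}$, and substituting into the displayed bound gives $m^{O(1)} \cdot 2^{-(m - t - c \log m)} = 2^{-(m-t) + O(\log m)}$, as required.
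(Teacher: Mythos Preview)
The paper does not prove this lemma; it is quoted from Ding, Harnik, Rosen, and Shaltiel and used as a black box. So there is no ``paper's proof'' to compare against, and your write-up should be assessed on its own.

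As it stands, your argument has a genuine gap at exactly the point you label the \emph{main obstacle}. You need $\mathbb{E}_{h_1}[M(h_1)^2] = m^{O(1)}$, and you correctly observe that 2-universality only controls $\sum_{y_1} |S \cap h_1^{-1}(y_1)|^2$, not the maximum over $y_1$. Neither of your two proposed fixes is carried out. Route~(i) changes the protocol (replacing the 2-universal $h_1$ by an $O(\log m)$-wise independent family), which is fine, but then you must actually prove the tail bound: with $|S| \le 2^t$ balls and $2^{t+c\log m}$ bins the mean load is $m^{-c}$, and a $k$-wise independent Chernoff-type bound plus a union bound over $2^{t+c\log m}$ bins requires $k$ of order $\Theta(t/\log m)$, not $O(\log m)$, to kill the union bound when $t$ is large---so the parameter choice needs care and is not simply ``$O(\log m)$-wise''. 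Route~(ii) appeals to ``linearity of the hash'', but 2-universal families need not be linear, and even for random linear maps the union bound over all $y_1$ is what costs you the extra factor of $2^t$ in the first place; you have not said what new leverage linearity provides.

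This is not a cosmetic issue: the DHRS construction deliberately uses an almost $t$-wise independent family in the first phase precisely so that \emph{every} fiber $h_1^{-1}(y_1)$ intersects $S$ in at most $m^{O(1)}$ points with overwhelming probability, which is the heart of their analysis. Your plan is in the right direction, but to make it a proof you must either reproduce that first-phase analysis (specifying the hash family, the independence degree as a function of $t$ and $m$, and the tail/union bound), or give a different self-contained argument that actually bounds $M(h_1)$ rather than $\sum_{y_1} |S\cap h_1^{-1}(y_1)|^2$. The uniformity claim for $W_{\bar d}$ also deserves a line of justification beyond ``symmetry'': you need pairwise independence (not just 2-universality in the collision sense) together with a bound on the probability that the fiber through $W$ has size different from two.
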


The interactive hashing scheme ensures that one of the outputs is almost 
uniformly random; however, in the oblivious transfer protocol, the two strings are not
used directly, but as encodings of subsets. For the protocol to succeed, both 
output strings should be valid encodings of subsets. Cover showed~\cite{IEEEIT:Cover73} the 
existence of an efficiently computable one to one mapping $F : {[n] \choose \ell}\rightarrow [{n \choose \ell}]$ 
for every integer $\ell\leq n$ (thus making it possible to encode the set $[n] \choose \ell$ in binary
strings of length $m = \lceil{ \log{n \choose \ell} }\rceil$). But using such mapping in a straight way
may result in only slightly more than half of the strings being valid encodings. Therefore 
we use the modified encoding of Savvides~\cite{Savvides07}, in which each string $W \in \bits^m$
encodes the same subset as $W \mod \binom{n}{\ell}$, thus implying that all strings always encode
valid subsets. In this encoding, each subset corresponds to either 1 or 2 strings in $\bits^m$, 
so this scheme can at most double the fraction of the strings that maps to Bob's subset of 
interest.

\section{Security Model}\label{sec:secmod}

In this section we specify the model used for proving the security of the
oblivious transfer protocol and also the resources available to the parties.
In the one-out-of-two string oblivious transfer, Alice gives two strings $S_0,S_1 \in \bits^k$ 
as input and Bob inputs a choice bit $c$.  Bob receives $S_c$ as output and remains 
ignorant about $S_{\overline c}$, while Alice should not learn Bob's choice bit.
As showed by Beaver~\cite{C:Beaver95}, there exists a very efficient reduction from
randomized OT to OT, therefore in this paper we consider for simplicity OT with random inputs. 
We consider malicious adversaries that can act arbitrarily. The protocol
participants are connected by both a noiseless channel and a generalized erasure channel. 
The security parameter $n$ determines the number of times that the generalized erasure channel
can be used.

 \begin{definition}[Generalized Erasure Channel~\cite{ISIT:AhlCsi07,AhlCsi13}]
A discrete memoryless channel $\{W: \mathcal{X} \rightarrow \mathcal{Y}\}$ is called
a \emph{generalized erasure channel} (GEC) if the output alphabet $\mathcal{Y}$ can be decomposed 
as $\mathcal{Y}_0 \cup \mathcal{Y}^*$ such that $W(y|x)$ does not depend on $x \in \mathcal{X}$, 
if $y \in \mathcal{Y}^*$. For a GEC, we denote $W_0(y|x)=\frac{1}{1-p^*}W(y|x)$, $x \in \mathcal{X}$, 
$y \in \mathcal{Y}_0$, where $p^*$ is the sum of $W(y|x)$ for $y \in \mathcal{Y}^*$ (not depending on $x$).
\end{definition}

We use the OT security definition from Cr\'{e}peau and Wullschleger~\cite{ICITS:CreWul08} because 
it implies the sequential composability of the protocols that meet it. Their definition is described below.
The statistical information of $X$ and $Y$ given $Z$ is defined as 
$$I_{Stat}(X;Y|Z)= \mathsf{SD} (P_{XYZ}, P_Z P_{X|Z} P_{Y|Z}).$$

A $\mathcal{F}$-hybrid protocol consists of a pair of algorithms $\CP = (A,B)$ that can interact 
and have access to some functionality $\mathcal{F}$. 
A pair of algorithms $\widetilde{P} = (\widetilde{A},\widetilde{B})$ is admissible for
protocol $\CP$ if at least one of the parties is honest, that is, if at least one of the equalities 
$\widetilde{A} = A$ and $\widetilde{B} = B$ holds. Let $S$ denote $(S_0,S_1)$.

\begin{theorem}[\cite{ICITS:CreWul08}]
\label{th:secOT} A protocol $\CP$ securely realizes string OT (for $k$-bit strings) with an error of 
at most $6 \epsilon$ if for every admissible pair of algorithms $\widetilde{P} = (\widetilde{A},\widetilde{B})$ 
for protocol $\CP$ and for all inputs $(S,C)$, $\widetilde{P}$ produces outputs $(U,V)$ such that the following
conditions are satisfied:%
\begin{itemize}
	\item (Correctness) If both parties are honest, then $U= \bot$ and 
$\Pr[V=S_C]\geq 1 - \epsilon$.
	\item (Security for Alice) If Alice is honest, then we have $U=\perp$ 
 and there exists a random variable $C'$ distributed according to $P_{C'|S,C,V}$, such that 
$I_{Stat}(S;C'|C) \leq \epsilon$ and $I_{Stat}(S;V|C,C',S_{C'}) \leq \epsilon$.
	\item (Security for Bob) If Bob is honest, we have $V \in \{0, 1\}^k$ and 
$I_{Stat}(C;U|S) \leq \epsilon$.
\end{itemize}
The protocol is secure if $\epsilon$ is negligible in the security parameter $n$.
\end{theorem}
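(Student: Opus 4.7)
The plan is to prove the theorem by exhibiting, for each admissible adversary, an ideal-world simulator whose output distribution is statistically close to the real-world execution, and then aggregating the error bounds. The three bullets in the statement are exactly the ``ingredients'' one needs for such simulators to exist: correctness fixes the honest--honest case; the existence of $C'$ gives an extractable effective choice bit for a malicious Bob; and the bound on $I_{Stat}(C;U|S)$ gives a simulatable view for a malicious Alice. The overall error $6\epsilon$ will come from telescoping several hybrids via the triangle inequality for $\mathsf{SD}$.

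First I would handle the corrupted-Bob case. Given any $\widetilde{B}$, the simulator runs $\widetilde{B}$ internally against a simulated honest Alice that uses dummy inputs $(\widetilde{S}_0,\widetilde{S}_1)$ chosen uniformly at random; from the resulting execution it reads off the random variable $C'$ whose existence is asserted by the theorem, forwards $C'$ to the ideal OT functionality, and obtains the genuine $S_{C'}$. It must still output a $V$ on behalf of $\widetilde{B}$ consistent with the real-world distribution. The hypothesis $I_{Stat}(S;C'|C) \leq \epsilon$ says that the joint $(S,C,C')$ is $\epsilon$-close to one in which $S$ and $C'$ are conditionally independent given $C$, so replacing Alice's dummy inputs by the real $S$ changes the distribution by at most $\epsilon$. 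The second hypothesis $I_{Stat}(S;V|C,C',S_{C'}) \leq \epsilon$ then lets the simulator resample $V$ from $P_{V|C,C',S_{C'}}$ without knowing $S_{\overline{C'}}$, adding another $\epsilon$. Via the triangle inequality this case contributes roughly $2\epsilon$.

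For the corrupted-Alice case, the simulator extracts the effective pair $(S_0,S_1)$ that $\widetilde{A}$ is using (by running her against a simulated honest Bob with a dummy $C$), sends it to the ideal functionality, and produces her output $U$. Because $I_{Stat}(C;U|S) \leq \epsilon$, the conditional distribution $P_{U|S,C}$ is $\epsilon$-close to $P_{U|S}$, so the simulator can sample $U$ from $P_{U|S}$ without ever learning $C$; switching the dummy $C$ to the real $C$ costs another $\epsilon$. Together with the $\epsilon$ slack from the correctness bullet in the honest--honest case and the final step of exchanging the real interaction with the ideal-functionality one, summing the contributions of the individual hybrids yields the claimed $6\epsilon$ bound.

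The main obstacle will be the bookkeeping of statistical distances when moving between real and ideal experiments, especially on the malicious-Bob side: the random variable $C'$ is purely existential and not, in general, efficiently computable by the simulator from its own view alone. In a fully rigorous argument one must either take $C'$ to be a function of the joint view that can be read off by an inefficient simulator (which is fine in the information-theoretic setting), or argue that such a $C'$ is determined by $\widetilde{B}$'s internal randomness together with the transcript. Once this is done the remaining work is a routine triangle-inequality chain in which each asserted $\epsilon$ corresponds to one hybrid transition.
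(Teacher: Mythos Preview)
The paper does not give a proof of this theorem at all: it is quoted verbatim from Cr\'epeau and Wullschleger~\cite{ICITS:CreWul08} and used purely as the security \emph{definition} that Protocol~\ref{prot:ot} must satisfy. So there is no ``paper's own proof'' to compare your proposal against; the authors simply import the result and then verify the three bulleted conditions for their protocol in the paragraphs labelled \emph{Correctness}, \emph{Security for Bob}, and \emph{Security for Alice}.

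That said, your sketch is a reasonable reconstruction of how the cited theorem is actually established in~\cite{ICITS:CreWul08}: build a simulator for each corruption case, use the existential $C'$ as the extracted choice bit for a malicious Bob, use $P_{U|S}$ to simulate a malicious Alice's output, and chain hybrids via the triangle inequality for $\mathsf{SD}$. Two places where your outline is looser than the original are (i) the precise accounting that yields exactly $6\epsilon$ rather than ``roughly $2\epsilon$ here and $2\epsilon$ there,'' and (ii) the extraction of the effective input pair $(S_0,S_1)$ on the malicious-Alice side, which in~\cite{ICITS:CreWul08} is done by fixing $S$ and conditioning, not by running $\widetilde{A}$ against a dummy Bob. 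These are details you would have to tighten, but the overall strategy matches the cited reference; it is just not something the present paper undertakes.
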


If the protocol uses the generalized erasure channel $n$ times, its \emph{oblivious transfer rate} 
is given by $R_{OT}=\frac{k}{n}$. The \emph{oblivious transfer capacity}~\cite{IEEEIT:NasWin08} 
$C_{OT}$ is the supremum of the achievable rates with secure protocols.

\section{OT Capacity of GEC}\label{ot_gec}

For a generalized erasure channel $\{W: \mathcal{X} \rightarrow \mathcal{Y}\}$, 
let $C(W_0)$ denote the Shannon capacity of the 
discrete memoryless channel $\{W_0:\mathcal{X} \rightarrow \mathcal{Y}_0\}$.
For the case of generalized erasure channels with $p^* \geq \frac{1}{2}$, the oblivious transfer 
capacity was determined by Ahlswede and Csisz\'ar~\cite{ISIT:AhlCsi07,AhlCsi13} against passive
adversaries (i.e., adversaries that always follow the protocol) and Pinto 
et al.~\cite{IEEEIT:PDMN11} against malicious adversaries.

\begin{theorem}[\cite{ISIT:AhlCsi07,IEEEIT:PDMN11,AhlCsi13}]
For a generalized erasure channel with $p^* \geq \frac{1}{2}$, the oblivious transfer capacity 
both in the case of passive adversaries as in the case of malicious adversaries is 
$C_{OT}=(1-p^*) C(W_0)$.
\end{theorem}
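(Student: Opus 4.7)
The plan is to establish matching upper and lower bounds for $C_{OT}$, with the lower bound achieved by a protocol that remains secure against malicious adversaries, so that the passive and malicious OT capacities coincide. Since any upper bound against passive adversaries automatically applies to malicious ones (a larger adversary class), and since any protocol secure against malicious adversaries is also secure against passive ones, it suffices to prove the converse in the passive setting (following Ahlswede and Csisz\'ar) and the matching achievability in the malicious setting (following Pinto et al.).

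For the converse $C_{OT} \leq (1-p^*)C(W_0)$, I would reproduce the passive-adversary argument of Ahlswede and Csisz\'ar. Let $X^n$ be Alice's channel inputs and $Y^n$ be Bob's channel outputs. The correctness condition (Bob outputs $S_C$) and the security-for-Alice condition (Bob remains ignorant of $S_{\bar C}$) together imply, via standard manipulations, that $k \leq I(X^n;Y^n) + o(n)$. The GEC structure then bounds $I(X^n;Y^n) \leq (1-p^*)n\,C(W_0)$, since only the $(1-p^*)n$ (in expectation) non-erased positions carry any information and each contributes at most $C(W_0)$ bits. Hence $k/n \leq (1-p^*)C(W_0) + o(1)$.

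For achievability against passive adversaries, Ahlswede and Csisz\'ar's protocol proceeds as follows. Alice sends $X^n$ drawn i.i.d.\ from the capacity-achieving input distribution of $W_0$. Bob identifies his erasure set $\mathcal{E}$ of size $\approx p^*n$ and sends Alice the partition $\mathcal{I}_c := [n]\setminus\mathcal{E}$ and $\mathcal{I}_{\bar c} := \mathcal{E}$. Alice picks two independent 2-universal hash functions $G_0,G_1$ and transmits $(G_i, S_i \oplus G_i(X^{\mathcal{I}_i}))$ for $i \in \{0,1\}$. Bob recovers $S_c$ because he knows $X^{\mathcal{I}_c}$. Security for Bob is immediate since the partition is symmetric in $c$. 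Security for Alice follows from the Leftover Hash Lemma applied to $X^{\mathcal{I}_{\bar c}}$: by the AEP, an honest Bob's min-entropy about $X^{\mathcal{I}_{\bar c}}$ is, up to $o(n)$ corrections, at least $|\mathcal{I}_{\bar c}|\,C(W_0)$; since $p^* \geq 1/2$ gives $|\mathcal{I}_{\bar c}| \geq (1-p^*)n$, the extracted key $K_{\bar c}$ of length $k \approx (1-p^*)n\,C(W_0)$ is statistically close to uniform from Bob's view.

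The main obstacle, and precisely Pinto et al.'s contribution, is extending this to malicious adversaries at the same rate. A cheating Bob could declare non-erased positions in both $\mathcal{I}_0$ and $\mathcal{I}_1$, aiming to learn both keys. Here the hypothesis $p^* \geq 1/2$ is crucial: Bob has at most $\approx (1-p^*)n \leq n/2$ positions about which he has any information, so if Alice enforces that $\mathcal{I}_0,\mathcal{I}_1$ are disjoint subsets each of size $(1-p^*)n$, at least one of them must contain $\Omega(n)$ positions that are erasures from Bob's view. A min-entropy splitting argument then shows that for any malicious Bob strategy, at least one of $X^{\mathcal{I}_0}, X^{\mathcal{I}_1}$ retains sufficient min-entropy, so the Leftover Hash Lemma certifies that the corresponding key is near-uniform. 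The delicate part is to bind Bob to a choice of $c$ before the hash functions are revealed --- via a cut-and-choose or interactive hashing step --- while verifying that all malicious-security overheads remain $o(n)$ so that the rate $(1-p^*)C(W_0)$ is preserved.
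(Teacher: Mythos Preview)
The paper does not give its own proof of this theorem: it is stated as a cited result, with the passive case attributed to Ahlswede and Csisz\'ar and the malicious case to Pinto et al. The paper's own technical contribution is the separate theorem for $p^* < 1/2$. So there is no in-paper proof to compare against; your proposal is effectively a sketch of the arguments in those cited works, and the high-level architecture you describe (passive converse plus malicious achievability at the same rate, the latter via interactive hashing) does match that literature.

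That said, your protocol description contains a concrete error. You set $\mathcal{I}_c := [n]\setminus\mathcal{E}$ and $\mathcal{I}_{\bar c} := \mathcal{E}$, which have sizes $\approx (1-p^*)n$ and $\approx p^*n$ respectively. For $p^* > 1/2$ these sizes differ, so Alice learns $c$ immediately from the partition, breaking security for Bob. The actual construction takes \emph{both} sets of size $\approx (1-p^*)n$: $\mathcal{I}_c$ is the set of non-erased positions and $\mathcal{I}_{\bar c}$ is a random subset of the erasure set of that same size, which is possible precisely because $p^* \geq 1/2$ guarantees $|\mathcal{E}| \geq (1-p^*)n$. This is also where the hypothesis is used in the malicious analysis: with equal-size sets each of cardinality $(1-p^*)n$, any cheating Bob must place at least half of his $\leq (1-p^*)n$ informative positions outside one of the two sets. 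A second inaccuracy: Bob does not ``know $X^{\mathcal{I}_c}$'' --- he only sees it through $W_0$ --- so an information-reconciliation step (Alice sends roughly $H(X|Y\in\mathcal{Y}_0)$ bits per position) is required before he can recover $S_c$, and the min-entropy left to Bob about the other string is governed by $H(X)$ on the erased positions minus that leakage, which is what yields the net rate $(1-p^*)C(W_0)$ rather than the quantity $|\mathcal{I}_{\bar c}|\,C(W_0)$ you wrote.
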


For the case of generalized erasure channels with $p^* < \frac{1}{2}$, a lower bound on the
OT capacity against passive adversaries was obtained by Ahlswede and Csisz\'ar~\cite{ISIT:AhlCsi07,AhlCsi13}.

\begin{theorem}[\cite{ISIT:AhlCsi07,AhlCsi13}]
For a generalized erasure channel with $p^* < \frac{1}{2}$, a lower bound on the oblivious transfer 
capacity in the case of passive adversaries is $C_{OT} \geq p^* C(W_0)$.
\end{theorem}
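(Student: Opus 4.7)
The plan is to exhibit, for every $\delta > 0$, an $n$-use protocol against passive adversaries that achieves OT rate arbitrarily close to $p^* C(W_0)$. Alice fixes the capacity-achieving input distribution $P_X$ for $W_0$ and sends $n$ i.i.d.\ samples $X^n$ through the GEC. Bob partitions $[n]$ into $\mathcal{E}$ (erasures) and $\mathcal{N}$ (non-erasures); by Chernoff both sets have size at least $(p^* - \delta) n$ with overwhelming probability, since $p^* < 1/2$ makes the erasures the bottleneck. Setting $\ell = \lfloor (p^* - \delta) n \rfloor$, Bob picks $I_c \subset \mathcal{N}$ and $I_{1-c} \subset \mathcal{E}$ uniformly at random with $|I_0| = |I_1| = \ell$, and announces the ordered pair $(I_0, I_1)$ on the noiseless channel. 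Alice then performs Slepian--Wolf reconciliation for each block, publishing a syndrome $R_b$ of length $\approx \ell \, H(X \mid Y_0) + o(\ell)$ under a good linear code for $W_0$, so that Bob decodes $X^{I_c}$ from $Y^{I_c}$ with vanishing error. Finally Alice samples two independent universal hash functions $G_0, G_1$ of output length $k = \ell \, C(W_0) - \omega(\sqrt{\ell})$ and sends $(G_b, S_b \oplus G_b(X^{I_b}))$ for $b = 0,1$; Bob recovers $S_c$ by unmasking with $G_c(X^{I_c})$. The rate is $k/n \to p^* C(W_0)$ as $\delta \to 0$ and $n \to \infty$.

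Security for Bob is essentially automatic in the passive setting: the erasure pattern is independent of both $X^n$ and $c$, and conditional on the realised erasure count both $I_c$ and $I_{1-c}$ are uniform random subsets of fixed size in their respective parts, so the joint distribution of $(X^n, I_0, I_1)$ is symmetric in $c$ and Alice's view is perfectly independent of $c$ up to the negligible concentration event. Security for Alice requires showing that $G_{1-c}(X^{I_{1-c}})$ is statistically close to uniform given Bob's full view. Because $I_{1-c} \subset \mathcal{E}$, Bob received no DMC output correlated with $X^{I_{1-c}}$; an asymptotic-equipartition argument then gives smooth min-entropy at least $\ell \, H(X) - o(\ell)$ for $X^{I_{1-c}}$, and the chain rule for smooth min-entropy under short side information reduces this by at most the bit-length of the syndrome, leaving $\ell \bigl( H(X) - H(X \mid Y_0) \bigr) - o(\ell) = \ell \, C(W_0) - o(\ell)$ bits of conditional smooth min-entropy. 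Applying the Leftover Hash Lemma with the stated choice of $k$ gives negligible statistical distance from uniform, so $S_{1-c}$ remains statistically hidden and the mutual-information conditions of Theorem~\ref{th:secOT} are satisfied.

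The principal technical obstacle is passing from Shannon-information quantities, which is the natural currency of the DMC capacity $C(W_0)$ and of the channel coding theorem underlying Slepian--Wolf reconciliation, to the smooth min-entropy required by the Leftover Hash Lemma and the composable security definition. I would handle this by restricting attention to a strongly typical subset of $X^n$, invoking the min-entropy AEP together with a chain rule for smooth min-entropy under short side information, and absorbing both the atypical set and the erasure-count concentration failure into a single negligible statistical-distance term. Once these standard estimates are in place, the symmetry argument for a passive Alice and the Leftover Hash Lemma together yield the claimed rate $p^* C(W_0)$.
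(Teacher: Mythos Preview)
The paper does not prove this theorem at all: it is stated as a cited result of Ahlswede and Csisz\'ar, and the paper's own contribution is the \emph{malicious} analogue (Theorem~\ref{the:main}), proved via Protocol~\ref{prot:ot}. So there is no ``paper's own proof'' to compare against here.

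That said, your sketch is a faithful reconstruction of the Ahlswede--Csisz\'ar passive protocol and is correct in outline. The backbone---capacity-achieving input through the GEC, Bob selecting an $\ell$-subset of non-erasures for $I_c$ and an $\ell$-subset of erasures for $I_{1-c}$ with $\ell\approx p^*n$, Slepian--Wolf--style reconciliation of length $\approx\ell H(X\mid Y_0)$, and privacy amplification down to $\approx\ell\,C(W_0)$ bits---is exactly the construction underlying both the cited passive result and the paper's malicious Protocol~\ref{prot:ot} (the latter just adds the interactive-hashing check of Steps~5--6 to force a dishonest Bob to place enough erasures in one of $\mathcal{R}_0,\mathcal{R}_1$). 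Your security-for-Bob argument is the right one: Alice's view $(X^n,I_0,I_1)$ is symmetric in $c$ because the erasure pattern is independent of $X^n$. Your security-for-Alice argument correctly identifies the Shannon-to-min-entropy passage as the only real work, and your proposed resolution via smooth min-entropy AEP plus the chain rule under short leakage is standard and sufficient; the paper handles the analogous step in its malicious proof by typicality arguments and the Leftover-Hash Lemma directly, which is essentially the same thing in different language.
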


In the current work we prove that the same OT rate that was achieved against passive adversaries 
can also be achieved against malicious ones.

\begin{theorem}\label{the:main}
For a generalized erasure channel with $p^* < \frac{1}{2}$, a lower bound on the oblivious transfer 
capacity in the case of \emph{malicious} adversaries is $C_{OT} \geq p^* C(W_0)$.
\end{theorem}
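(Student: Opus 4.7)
The plan is to adapt the Ahlswede--Csisz\'ar passive protocol by interposing an interactive hashing round that forces even a malicious Bob to end up with at least one of the two OT index subsets containing enough erasures to preserve Alice's privacy. Concretely, Alice draws $X_1,\dots,X_n$ i.i.d.\ from the capacity-achieving input distribution of $W_0$ and transmits them through the GEC. Bob observes his outputs and records the erasure set $\mathcal{E}$, whose size concentrates around $np^*$ by a Chernoff bound. Honest Bob picks a size-$k$ subset $\mathcal{I}\subseteq[n]$ supported entirely on his non-erasure positions and encodes it as a string $W\in\bits^m$ via the Savvides variant of Cover's encoding of $\binom{[n]}{k}$, where $m=\lceil\log\binom{n}{k}\rceil$. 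The parties then run the four-message IH of Lemma~\ref{lem:interactive-hashing-ding} on $W$, obtaining two strings decoded to subsets $\mathcal{I}_0,\mathcal{I}_1$; Bob's choice bit is the index $c$ with $\mathcal{I}_c=\mathcal{I}$. Alice sends Slepian--Wolf syndromes $T_0,T_1$ of length roughly $kH(X|Y_0)$ each, so that any party holding non-erased outputs $Y^{\mathcal{I}_b}$ can recover $X^{\mathcal{I}_b}$; she then publishes two universal hash seeds $G_0,G_1$ and declares the OT outputs to be $S_b=G_b(X^{\mathcal{I}_b})$. Bob recovers $S_c$ by decoding $X^{\mathcal{I}_c}$ and evaluating $G_c$.

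Correctness is immediate from Slepian--Wolf decoding, and security for Bob against a malicious Alice follows from the secure-for-Bob guarantee of IH combined with the at-most-two-to-one Savvides encoding: Alice's view is distributed identically under $c=0$ and $c=1$, and the simulator $C'$ required by Theorem~\ref{th:secOT} is simply the index $d$ such that $\mathcal{I}_d$ corresponds to the string Bob actually fed into the IH.

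The central new step is security for Alice against a malicious Bob. Define $\mathcal{B}_\tau=\{\mathcal{J}\in\binom{[n]}{k}:|\mathcal{J}\cap\mathcal{E}|<\tau\}$, the family of size-$k$ subsets that miss most of Bob's erasures and thus would let him extract too much of $X^{\mathcal{J}}$. A hypergeometric-tail counting yields $\log|\mathcal{B}_\tau|\le m-\Omega(n)$ provided $\tau$ is chosen slightly below $kp^*$, because for $p^*<1/2$ the entropy gap $h(p^*)-(1-p^*)h(p^*/(1-p^*))$ is strictly positive. Applying Lemma~\ref{lem:interactive-hashing-ding} with $s=\log|\mathcal{B}_\tau|$ then implies, with probability $1-2^{-\Omega(n)}$, that at least one of $\mathcal{I}_0,\mathcal{I}_1$, say $\mathcal{I}_{\bar d}$, satisfies $|\mathcal{I}_{\bar d}\cap\mathcal{E}|\ge\tau$. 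Bob's min-entropy on $X^{\mathcal{I}_{\bar d}}$ given his channel view and the syndrome is then at least $\tau H(X)-kH(X|Y_0)$, and the Leftover Hash Lemma certifies that $S_{\bar d}$ is $2^{-\Omega(n)}$-close to uniform. Assembling the hybrid argument that packages the IH, Slepian--Wolf, Chernoff, and extractor errors into the single $6\epsilon$ bound of Theorem~\ref{th:secOT} is routine.

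The principal obstacle is the joint tuning of $k$ and $\tau$ so that the extracted length $\ell=\tau H(X)-kH(X|Y_0)-o(n)$ actually reaches the passive capacity $np^*I(X;Y_0)=np^*C(W_0)$ without collapsing the IH margin. The naive choice $k=np^*,\tau\approx np^*$ is inadmissible because $\mathcal{B}_\tau$ then exhausts $\binom{n}{k}$; one must take $k$ slightly larger and follow the entropy budget carefully. This careful exploitation of the gap $h(p^*)>(1-p^*)h(p^*/(1-p^*))$, available \emph{only} for $p^*<1/2$, is the novel use of the Cr\'{e}peau--Savvides interactive hashing machinery alluded to in the introduction; it is the step that fails in the complementary regime already treated by Pinto et al. If the plain counting bound proves too weak, a short cut-and-choose pre-test, in which Alice asks Bob to declare a small random sample of positions as erased or not and verifies consistency with the transmitted $X$, can be folded in to rule out gross misreporting of $\mathcal{E}$ before the IH round, tightening the analysis without affecting the asymptotic rate.
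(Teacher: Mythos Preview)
Your approach has a genuine quantitative gap that prevents it from reaching the target rate $p^*C(W_0)$. When interactive hashing is run directly on the encoding of a size-$k$ OT subset, the security-for-Alice guarantee can at best force the ``other'' subset $\mathcal{I}_{\bar d}$ to contain roughly the hypergeometric-typical number of erasures, namely $\tau\approx kp^*$: for any $\tau$ exceeding $kp^*$ by a linear amount the family $\mathcal{B}_\tau$ already has size $(1-o(1))\binom{n}{k}$, so Lemma~\ref{lem:interactive-hashing-ding} gives nothing. Since honest Bob must place $\mathcal{I}$ inside his non-erased positions, $k\le(1-p^*)n$, and hence $\tau\le p^*(1-p^*)n$. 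The extractable length is then at most about $\tau\,C(W_0)$ (your expression $\tau H(X)-kH(X|Y_0)$ is in fact a weaker lower bound on this same quantity, since it discards the residual conditional entropy on the $k-\tau$ non-erased positions), so the achieved rate is at most $p^*(1-p^*)C(W_0)<p^*C(W_0)$. Neither ``taking $k$ slightly larger'' nor the gap $h(p^*)>(1-p^*)h(p^*/(1-p^*))$ rescues this: the latter only certifies that $\mathcal{I}_{\bar d}$ contains \emph{some} erasures, not that it captures essentially all $np^*$ of them, and your cut-and-choose patch validates Bob's erasure set but does not alter the combinatorics of a random size-$k$ subset.

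The paper circumvents this barrier by \emph{not} hashing the OT subsets. Bob first announces a partition of $[n]$ into halves $\mathcal{R}_0,\mathcal{R}_1$, the honest strategy placing \emph{all} erasures in $\mathcal{R}_{\bar c}$. Interactive hashing is then run only on a small \emph{probe} index set $\mathcal{T}\subset[n/2]$ of size $\beta n=(1/2-p^*-\alpha)n$; Bob must subsequently \emph{reveal his channel outputs} on the probed positions $\mathcal{R}_0^{\mathcal{T}_{\bar a}}$ and $\mathcal{R}_1^{\mathcal{T}_a}$, and Alice checks joint typicality with her inputs. The analysis then splits: if a cheating Bob spread $\ge 2\alpha n$ erasures into each half, IH guarantees one probe hits $\ge\alpha n$ of them and the typicality test rejects; otherwise one half has $<2\alpha n$ erasures, so after deleting the probe the remaining set $\mathcal{Q}$ on the other side, of size $(p^*+\alpha)n$, still contains $\ge(p^*-4\alpha)n$ erasures. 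It is this reveal-and-test mechanism, absent from your proposal, that certifies concentration of essentially \emph{all} the erasures on one side and lifts the rate from $p^*(1-p^*)C(W_0)$ up to $p^*C(W_0)$.
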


We present next a protocol that achieves such OT rate and its security proof. This protocol 
belongs to the lineage of OT protocols initiated by Cr\'{e}peau and Savvides~\cite{EC:CreSav06, Savvides07},
which use interactive hashing as a central, efficient mechanism to ensure that (a malicious) Bob 
is following the protocol rules without revealing to Alice his choice bit. Due to the fact that
in our case the non-erasure positions are the majority, our usage of the interactive hashing 
protocol is different from the previous protocols.

\begin{protocol}\label{prot:ot}
\mbox{}
{\rm 
\begin{enumerate}
\item (Parameter Setting) Alice and Bob select a positive constant 
$\alpha$ such that $3 \alpha <  1/2-p^*$ and set $\beta=1/2-p^*-\alpha$.
Note that $\beta > 2\alpha$.
\item (GEC Usage) Alice chooses $x^n$ randomly according to the probability
distribution that achieves the Shannon capacity of $W_0$. She sends
$x^n$ to Bob using the GEC, who receives the string $y^n$.
\item (Good/Bad Sets) Bob divides the string $y^n$
into a set $\mathcal{G}$ of good positions (those with $y \in \mathcal{Y}_0$) and 
a set $\mathcal{B}$ of bad positions (those corresponding to erasures).
The protocol is aborted if $|\mathcal{G}|<(1-p^*-\alpha)n$.
\item (Partitioning) Bob chooses uniformly randomly a bit $c$ and a $m$-bit string $w$, where 
$m= {\lceil{ \log{n/2 \choose \beta n} }\rceil}$. He decodes $w$ into a subset $\mathcal{T}$ of 
cardinality $\beta n$ out of $n/2$ (using the encoding scheme described in Section~\ref{sec:pre}). 
Then he partitions the $n$ positions into two sets of same cardinality. For $\mathcal{R}_c$ 
he picks randomly, and without repetition, $n/2$ positions from $\mathcal{G}$. For $\mathcal{R}_{\overline c}$,
he first picks the subset $\mathcal{R}_{\overline c}^{\mathcal{T}}$ randomly from the remaining positions 
from $\mathcal{G}$ and then fills the rest of $\mathcal{R}_{\overline c}$ randomly with the $n/2- \beta n$ still
unused positions. Bob sends the descriptions of $\mathcal{R}_0$ and $\mathcal{R}_{1}$ to Alice, who aborts 
if there is some repeated position.
\item (Interactive Hashing) Bob sends $w$ to Alice using the interactive hashing protocol. Let $w_0,w_1$ be the output strings,
$\mathcal{T}_0$, $\mathcal{T}_1$ the decoded subsets and $d$ be such that $w_d = w$.
\item (Checking the Partitioning) Bob announces $a=d\oplus c$, $y^{\mathcal{R}_0^{\mathcal{T}_{\overline a}}}$ and 
$y^{\mathcal{R}_1^{\mathcal{T}_a}}$. Alice verifies if $y^{\mathcal{R}_0^{\mathcal{T}_{\overline a}}}$ and 
$y^{\mathcal{R}_1^{\mathcal{T}_a}}$ are $2\epsilon$-jointly typical with her input on these positions for
the channel $\{W_0:\mathcal{X} \rightarrow \mathcal{Y}_0\}$ (see Appendix for the considered definitions
of typicality); aborting if this is not the case.
\item (Strings Transmission) Let $\mathcal{Q}_0=\mathcal{R}_0 \setminus \mathcal{T}_{\overline a}$,
 $\mathcal{Q}_1=\mathcal{R}_1 \setminus \mathcal{T}_a$ and $\mu = p^*+ \alpha$. 
Alice randomly chooses 2-universal hash functions 
$g_0,g_1: \mathcal{X}^{\mu n} \rightarrow \bits^{\mu n[H(X|Y \in \mathcal{Y}_0)+\epsilon]}$ 
(with $\epsilon > 0$ such that the output length is integer) and computes $g_0(x^{\mathcal{Q}_0})$ and $g_1(x^{\mathcal{Q}_1})$. 
In addition she also randomly chooses 2-universal hash functions 
$h_0,h_1: \mathcal{X}^{\mu n} \rightarrow \bits^{\delta n}$, 
where $\delta=(\mu - 5 \alpha)H(X)-\mu(H(X|Y \in \mathcal{Y}_0)+\epsilon) - \gamma$ and 
$\gamma >0$ is such that the output length is integer. Alice sends 
Bob $g_0(x^{\mathcal{Q}_0})$, $g_1(x^{\mathcal{Q}_1})$ and the descriptions of 
$g_0$, $g_1$, $h_0$, $h_1$. She outputs $S_0=h_0(x^{\mathcal{Q}_0})$ and $S_1=h_1(x^{\mathcal{Q}_1})$.
\item (Output) Bob computes all possible $\widetilde{x}^{\mathcal{Q}_c}$ that are jointly typical with 
$y^{\mathcal{Q}_c}$ and satisfy $g_c(\widetilde{x}^{\mathcal{Q}_c})=g_c(x^{\mathcal{Q}_c})$. If there exists 
exactly one such $\widetilde{x}^{\mathcal{Q}_c}$, then Bob outputs $S_c=h_c(\widetilde{x}^{\mathcal{Q}_c})$;
otherwise $S_c=0^{\delta n}$.

\end{enumerate}
} 
\end{protocol}

\begin{theorem}
This string oblivious transfer protocol is secure.
\end{theorem}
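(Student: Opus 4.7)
The plan is to verify the three conditions of Theorem~\ref{th:secOT} separately, with the interactive hashing step playing the role of binding a malicious Bob to an effective choice bit even though erasures are now the minority.

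Correctness is essentially a Chernoff plus AEP calculation. With overwhelming probability $|\mathcal{G}| \geq (1-p^*-\alpha)n$, so Step~3 does not abort, and the pairs $(x^{\mathcal{R}_0^{\mathcal{T}_{\overline a}}}, y^{\mathcal{R}_0^{\mathcal{T}_{\overline a}}})$ and $(x^{\mathcal{R}_1^{\mathcal{T}_a}}, y^{\mathcal{R}_1^{\mathcal{T}_a}})$ are jointly typical by the AEP, since both index sets lie in $\mathcal{G}$ by construction. For Bob's decoding, the number of $\widetilde{x}^{\mathcal{Q}_c}$ jointly typical with $y^{\mathcal{Q}_c}$ is at most $2^{\mu n(H(X|Y\in\mathcal{Y}_0)+\epsilon/2)}$; since $g_c$ outputs $\mu n(H(X|Y\in\mathcal{Y}_0)+\epsilon)$ bits and is 2-universal, a union bound isolates the true $x^{\mathcal{Q}_c}$ with probability $1 - 2^{-\Omega(n)}$, so $V = S_c$ with overwhelming probability.

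For security for Bob against a malicious Alice, each of Bob's outgoing messages must be shown to be independent of $c$. The pair $(\mathcal{R}_0, \mathcal{R}_1)$ is, conditioned on $c$, produced by the same sampling procedure up to a swap of the two components, so its marginal is independent of $c$. The hiding clause in the security definition of interactive hashing implies that Alice's view of Step~5 is identical when Bob inputs $w_0$ or $w_1$, so $d$ is perfectly hidden and $a = d \oplus c$ is uniform and independent of $c$ from Alice's viewpoint. Conditioned on $a$, the restrictions $y^{\mathcal{R}_0^{\mathcal{T}_{\overline a}}}$ and $y^{\mathcal{R}_1^{\mathcal{T}_a}}$ are drawn from good positions in a procedure symmetric in $c$, hence also independent of $c$. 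Chaining these observations gives $I_{Stat}(C;U\mid S) = 0$.

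Security for Alice against a malicious Bob is the heart of the argument and the main obstacle. Let $\mathcal{G}^*$ denote Bob's actual non-erased positions and $r_b = |\mathcal{G}^* \cap \mathcal{R}_b|$. Since an erasure symbol is not jointly typical with any $x$-block, the typicality check in Step~6 forces $\mathcal{R}_0^{\mathcal{T}_{\overline a}}$ and $\mathcal{R}_1^{\mathcal{T}_a}$ to lie in $\mathcal{G}^*$ except with negligible probability. Applying Lemma~\ref{lem:interactive-hashing-ding} with $t$ chosen so that $m - t = \Theta(n)$, one of the two output subsets, $\mathcal{T}_{\overline d}$, is $2^{-\Omega(n)}$-close to uniform among size-$\beta n$ subsets of $[n/2]$, and the side $b_r = a \oplus d$ on which it is checked satisfies $\Pr[\mathcal{T}_{\overline d} \subseteq \mathcal{G}^* \cap \mathcal{R}_{b_r}] \approx \binom{r_{b_r}}{\beta n} / \binom{n/2}{\beta n}$; for this ratio to be non-negligible we must have $r_{b_r} \leq (1/2 - \alpha)n$, i.e., $\mathcal{R}_{b_r}$ contains at least $\alpha n$ erasures. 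Taking the effective choice bit $c'$ so that $\overline{c}' = b_r$, the set $\mathcal{Q}_{\overline{c}'}$ then contains at least $\alpha n - o(n)$ erasures whose locations are essentially independent of Bob's view. It follows that the min-entropy of $x^{\mathcal{Q}_{\overline{c}'}}$ conditioned on Bob's view is at least $\alpha n \cdot H(X) - \mu n(H(X|Y\in\mathcal{Y}_0)+\epsilon)$ after subtracting the bits leaked by $g_{\overline{c}'}$; by the leftover hash lemma, $S_{\overline{c}'} = h_{\overline{c}'}(x^{\mathcal{Q}_{\overline{c}'}})$ is then $2^{-\Omega(n)}$-close to uniform given Bob's view, since $\delta n$ is calibrated with a $\Theta(\alpha n)$ slack below that min-entropy bound. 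The technical delicacy is the joint calibration of the parameters so that the interactive-hashing bound, the typicality-abort probability, and the leftover-hash slack all close simultaneously with negligible error.
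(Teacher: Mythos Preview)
Your correctness and security-for-Bob arguments are close to the paper's, though note that $(\mathcal{R}_0,\mathcal{R}_1)$ is \emph{not} symmetric under swapping: $\mathcal{R}_c$ is all good while $\mathcal{R}_{\overline c}$ carries all of $\mathcal{B}$. The independence from $c$ holds only because Alice has no information about the erasure pattern, which is the reasoning the paper actually uses. Also, the interactive hashing only gives $2^{-m}$-uniformity, so you get $I_{Stat}(C;U\mid S)$ negligible, not exactly zero.

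The security-for-Alice argument, however, has a real gap. First, a sign error: for $\Pr[\mathcal{T}_{\overline d}\subseteq\mathcal{G}^*\cap\mathcal{R}_{b_r}]\approx\binom{r_{b_r}}{\beta n}/\binom{n/2}{\beta n}$ to be non-negligible you need $r_{b_r}$ \emph{close to} $n/2$, i.e.\ $\mathcal{R}_{b_r}$ has \emph{few} erasures, the opposite of what you wrote. Second, and fatally, even if you repair the direction and the choice of $c'$, your conclusion that $\mathcal{Q}_{\overline{c}'}$ contains only about $\alpha n$ erasures yields min-entropy roughly $\alpha n\,H(X)-\mu n(H(X|Y\in\mathcal{Y}_0)+\epsilon)$, which is far below the $\delta n\approx (\mu-5\alpha)H(X)n-\mu(H(X|Y\in\mathcal{Y}_0)+\epsilon)n$ bits you must extract; for small $\alpha$ your bound is even negative. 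Third, you invoke only the $\eta$-uniformity of interactive hashing, but Bob announces $a$ (hence chooses $b_r$) \emph{after} seeing both $w_0,w_1$, so a bound on a uniformly random $\mathcal{T}_{\overline d}$ for a fixed $b_r$ does not control an adaptive Bob.

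The paper handles this with a genuinely different mechanism. It splits into two cases on whether both $u(\mathcal{R}_0),u(\mathcal{R}_1)\geq 2\alpha n$. When both are large, it uses the \emph{binding} $(s,\rho)$-security of interactive hashing (Lemma~\ref{lem:interactive-hashing-ding}): the strings $w$ encoding subsets $\mathcal{T}$ that are ``good'' (fewer than $\alpha n$ erasures) for either $\mathcal{R}_0$ or $\mathcal{R}_1$ form a set of density at most $4(1-2\alpha)^{\alpha n}$, so with overwhelming probability at least one of $w_0,w_1$ is bad for both sides, and whichever $a$ Bob picks he must reveal jointly typical outputs on $\geq\alpha n$ erased positions, which he cannot do. When one side, say $\mathcal{R}_0$, has $u(\mathcal{R}_0)<2\alpha n$, a counting argument ($|\mathcal{B}|>(p^*-\alpha)n$ and $u(\mathcal{R}_1^{\mathcal{T}_a})<\alpha n$ for Bob to pass) forces $u(\mathcal{Q}_1)\geq(p^*-4\alpha)n=(\mu-5\alpha)n$, which is exactly the number of erasures needed so that $H_\infty(X^{\mathcal{Q}_1}\mid\mathrm{View_{Bob}})$ exceeds $\delta n$ with the right slack. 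Your sketch is missing both the binding step and the case split, and the quantitative min-entropy you obtain is off by a factor of roughly $p^*/\alpha$.
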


\paragraph{Correctness} If both Alice and Bob are honest, Bob will get the correct output value unless
he aborts in the Good/Bad Sets step or if the he does not recover exactly $\widetilde{x}^{\mathcal{Q}_c}=x^{\mathcal{Q}_c}$
in the Output step. But the probability that Bob has to abort in the Good/Bad Sets step is a negligible function of the 
security parameter $n$ due to the the Chernoff bound~\cite{Chernoff52}. Bob does not recover the correct 
$\widetilde{x}^{\mathcal{Q}_c}=x^{\mathcal{Q}_c}$ if either $x^{\mathcal{Q}_c}$ is not jointly typical with $y^{\mathcal{Q}_c}$ 
or if there exists another $\overline{x}^{\mathcal{Q}_c}$ that has $g_c(\overline{x}^{\mathcal{Q}_c})=g_c(x^{\mathcal{Q}_c})$ and
is jointly typical with $y^{\mathcal{Q}_c}$. The former case only occurs with negligible probability due to the 
definition of joint typicality. For the latter case, an upper bound on the number of $\overline{x}^{\mathcal{Q}_c}$ that
are jointly typical with $y^{\mathcal{Q}_c}$ is $2^{\mu n[H(X|Y \in \mathcal{Y}_0)+\epsilon']}$, for $0<\epsilon'<\epsilon$ 
and $n$ sufficiently large. Therefore according to the Leftover-Hash Lemma, for $n$ sufficiently large, with overwhelming probability
$g_c(\overline{x}^{\mathcal{Q}_c}) \neq g_c(x^{\mathcal{Q}_c})$ for all these other $\overline{x}^{\mathcal{Q}_c}$ that
are jointly typical with $y^{\mathcal{Q}_c}$. As all events that can result in Bob not obtaining the correct output only 
occur with negligible probability in $n$, the protocol is correct.

\paragraph{Security for Bob} In a generalized erasure channel, each input symbol $x$ is erased 
with the same probability $p^*$. Hence Alice has no knowledge about the erasures and thus 
from Alice's point of view the sets $(\mathcal{R}_0,\mathcal{R}_1)$ are independent from the choice bit $c$.
The only other point where the bit $c$ is used is to compute $a=d\oplus c$ in the Checking the Partitioning 
step. The interactive hashing protocol is $\eta'<2^{-m}$ uniform, which is negligible since 
$m= {\lceil{ \log{n/2 \choose \beta n} }\rceil} = O(n)$ by applying Stirling's approximation. Thus 
with overwhelming probability $w_{\bar{d}}$ is uniform in $\bits^m\setminus w$, and so 
Alice's views are identical for $d=0$ and $d=1$. Hence she gains no information about $d$ and
therefore about $c$. Note that in the Output step Bob does not abort, so Alice cannot use 
reaction attacks. Therefore with overwhelming probability Alice's view of the protocol is independent from $c$.

\paragraph{Security for Alice} The proof of security for Alice follows the lines of 
Savvides' proof~\cite[Section~5.1]{Savvides07}, but we use new variants of the 
supporting definitions and lemmas due to the fact that we use the interactive hashing
protocol in a different way.

\begin{definition}
Let $u(\mathcal{R})$ be the number of positions contained in $\mathcal{R}$ such that the corresponding output 
at this position was an erasure.
\end{definition}

\begin{definition}
$\mathcal{T}$ is called \emph{good} for $\mathcal{R}$ if $u(\mathcal{R}^{\mathcal{T}}) < \alpha n$, otherwise 
it is called \emph{bad} for $\mathcal{R}$.
\end{definition}

The proof is divided in two cases as follows: (i) both $u(\mathcal{R}_0), u(\mathcal{R}_1) \geq 2 \alpha n$, 
(ii) either $u(\mathcal{R}_0)$ or $u(\mathcal{R}_1)$ is less than $2 \alpha n$. 

\paragraph*{Case 1} For proving Alice's security in the first case we will need the following lemmas.

\begin{lemma}
Let $\mathcal{R}$ be a set of cardinality $n/2$ such that $u(\mathcal{R}) \geq 2 \alpha n$. The fraction $f$ of subsets 
$\mathcal{T} \subset \mathcal{R}$ of cardinality $\beta n$ that are good for $\mathcal{R}$ satisfies $f < (1-2\alpha)^{\alpha n}$.
\end{lemma}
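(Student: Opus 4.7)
The plan is to view the uniform choice of $\mathcal{T}$ as a sequential sampling of $\beta n$ positions without replacement from $\mathcal{R}$, and to bound the probability of the ``good'' event by tracking the running erasure count. First I would reduce to the worst case $u(\mathcal{R}) = 2\alpha n$: the fraction $f$ is monotone non-increasing in $u(\mathcal{R})$, because adding more erasures to $\mathcal{R}$ can only make it harder for a uniformly random $\beta n$-subset $\mathcal{T}$ to contain strictly fewer than $\alpha n$ of them.

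Then I would draw the $\beta n$ positions of $\mathcal{T}$ one at a time and let $j_k$ denote the number of erasures drawn after $k$ picks. For every history with $j_{k-1} < \alpha n$ (i.e., a history compatible with $\mathcal{T}$ still being good), the conditional probability that the next pick is an erasure satisfies
\[
\frac{u(\mathcal{R}) - j_{k-1}}{n/2 - (k-1)} \;\geq\; \frac{2\alpha n - (\alpha n - 1)}{n/2} \;>\; 2\alpha ,
\]
using $u(\mathcal{R}) \geq 2\alpha n$, $j_{k-1} \leq \alpha n - 1$, and $k-1 < n/2$. Consequently, while $\mathcal{T}$ is still potentially good, each step picks a non-erasure with conditional probability at most $1-2\alpha$. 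By the chain rule for conditional probabilities, singling out the $\alpha n$ ``critical'' draws at which the running erasure count must remain strictly below $\alpha n$ and chaining the conditional bound across them yields the factor $(1-2\alpha)^{\alpha n}$. Equivalently, I could couple the sequential sampling with an independent sequence of Bernoulli$(2\alpha)$ trials so that the real erasure count stochastically dominates a $\mathrm{Binomial}(\beta n, 2\alpha)$ random variable on the good event, and conclude via a Chernoff-type tail bound.

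The main obstacle is that the per-step lower bound $\geq 2\alpha$ is valid only on the random event $\{j_{k-1} < \alpha n\}$, so the chaining (or coupling) must be set up with some care: either one isolates exactly the set of $\alpha n$ draws at which the good regime is being jeopardised and applies the conditional bound there, or one defines the coupling step by step and handles what happens once the good event has already failed. The hypotheses $u(\mathcal{R}) \geq 2\alpha n$ and $\beta > 2\alpha$ (which follows from the parameter setting $3\alpha < 1/2 - p^*$) guarantee that throughout the sampling the erasure and non-erasure pools both remain large enough for the estimate $(\alpha n + 1)/(n/2) \geq 2\alpha$ to hold without boundary corrections, so that the $\alpha n$ critical draws can always be located within the $\beta n$ sampled positions.
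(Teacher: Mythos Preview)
Your proposal is correct and follows essentially the same approach as the paper: sample the $\beta n$ positions of $\mathcal{T}$ sequentially without replacement, observe that on the good event the conditional probability of a non-erasure at each step is at most $1-2\alpha$, and chain these bounds. The paper phrases the final count slightly more directly than your ``$\alpha n$ critical draws'': it notes that a good $\mathcal{T}$ must contain at least $\beta n-\alpha n$ non-erasure positions, obtains $\Pr[\mathcal{T}\text{ good}]<(1-2\alpha)^{\beta n-\alpha n}$, and then invokes $\beta>2\alpha$ to conclude $(1-2\alpha)^{\beta n-\alpha n}<(1-2\alpha)^{\alpha n}$; your worst-case reduction and coupling alternative are not needed but are sound additions.
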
 

\begin{proof}
We prove that a subset $\mathcal{T}$ chosen uniformly at random will be good for $\mathcal{R}$ with probability
smaller than $(1-2\alpha)^{\alpha n}$ using the probabilistic method. One way of choosing $\mathcal{T}$ is by picking
sequentially at random, and without replacement, $\beta n$ positions out of the $n/2$ positions in $\mathcal{R}$.
For $1<i<\beta n$, the probability $p_i$ that the $i$-th chosen position is a non-erasure given that the subset $\mathcal{T}$ 
does not have enough erasure positions so far to be considered bad for $\mathcal{R}$ (i.e., less than $\alpha n$ erasures) is 
upper bounded by
$$
p_i < 1 - \frac{2 \alpha n-\alpha n}{n/2} = 1 - 2 \alpha
$$
Since for a subset $\mathcal{T}$ to be considered good for $\mathcal{R}$ it needs to have at least $\beta n - \alpha n$ non-erasure
positions, we have that 
$$\Prob{\mathcal{T} \text{ is good for } \mathcal{R}} < (1 - 2 \alpha)^{\beta n - \alpha n} < (1 - 2 \alpha)^{\alpha n}.$$
\end{proof}

\begin{lemma}
Let $\mathcal{R}_0,\mathcal{R}_1$ be sets of cardinality $n/2$ such that $u(\mathcal{R}_0) \geq 2 \alpha n$ and 
$u(\mathcal{R}_1) \geq 2 \alpha n$. The fraction of strings $w$ that decode to subsets $\mathcal{T}$ that are good 
for either $\mathcal{R}_0$ or $\mathcal{R}_1$ is no larger than $4(1 - 2 \alpha)^{\alpha n}$.
\end{lemma}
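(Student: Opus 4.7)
The plan is to reduce this to the previous lemma applied separately to $\mathcal{R}_0$ and $\mathcal{R}_1$, combined with a union bound and then a conversion from fractions of subsets to fractions of strings via the Savvides encoding. So I would proceed in three short steps.

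First, I would invoke the previous lemma on each of $\mathcal{R}_0$ and $\mathcal{R}_1$ separately. Since both satisfy $u(\mathcal{R}_i) \geq 2\alpha n$, the fraction of subsets $\mathcal{T} \subset [n/2]$ of cardinality $\beta n$ that are good for $\mathcal{R}_i$ is strictly less than $(1-2\alpha)^{\alpha n}$ for each $i \in \{0,1\}$.

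Second, a trivial union bound yields that the fraction of subsets that are good for $\mathcal{R}_0$ \emph{or} $\mathcal{R}_1$ is less than $2(1-2\alpha)^{\alpha n}$. Equivalently, writing $N = \binom{n/2}{\beta n}$, the number of such subsets is less than $2(1-2\alpha)^{\alpha n}\, N$.

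Third, I would translate this bound on subsets into a bound on strings $w \in \bits^m$ using the properties of the modified Savvides encoding recalled in Section~\ref{sec:IH}. Two key facts from that scheme: every string $w$ decodes to a valid subset (so the set of strings is partitioned according to which subset they decode to), and each subset is the image of at most $2$ strings, because $m = \lceil \log N \rceil$ gives $2^m < 2N$. Hence the number of strings $w$ whose decoded subset is good for $\mathcal{R}_0$ or $\mathcal{R}_1$ is at most $2$ times the number of such subsets, i.e.\ less than $4(1-2\alpha)^{\alpha n}\, N$. Dividing by the total number $2^m \geq N$ of strings gives the claimed fraction of at most $4(1-2\alpha)^{\alpha n}$.

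There is no real obstacle here: the lemma is essentially a union bound combined with the worst-case factor-of-two blowup inherent to the modified Cover/Savvides encoding. The only thing to keep track of is that the ``fraction'' in the previous lemma is over subsets (equivalently, uniform choice of $\mathcal{T}$), whereas the fraction here is over strings $w$, and the conversion between the two costs exactly the factor of $2$ that yields $4$ instead of $2$ in the final bound.
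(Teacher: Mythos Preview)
Your proposal is correct and follows exactly the same route as the paper's own proof: apply the previous lemma to each $\mathcal{R}_i$, take a union bound to get the fraction of subsets below $2(1-2\alpha)^{\alpha n}$, and then absorb the at-most-two-strings-per-subset property of the Savvides encoding to pass to the fraction of strings, incurring the extra factor of $2$. Your write-up is simply more explicit about the arithmetic (in particular the use of $2^m \geq N$ in the denominator), but the argument is identical.
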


\begin{proof}
It follows from the previous lemma and the union bound that the fraction $f$ of subsets $\mathcal{T}$ that are good
for either $\mathcal{R}_0$ or $\mathcal{R}_1$ is smaller than $2(1 - 2 \alpha)^{\alpha n}$. Then the lemma follows 
straightforwardly from the fact that in the encoding scheme there are either one or two strings mapping to each set.
\end{proof}

Since the fraction of the strings $w \in \bits^m$ that are good for either $\mathcal{R}_0$ or $\mathcal{R}_1$ is no 
larger than $4(1 - 2 \alpha)^{\alpha n}$, we can set the security parameter $s$ of the interactive hashing protocol to 
$\log (4(1 - 2 \alpha)^{\alpha n}2^m)=m+ \alpha n \log (1 - 2 \alpha) + 2$ and thus have
$\rho= 2^{-(m-s)+O(\log m)} = 2^{\alpha n \log (1 - 2 \alpha)+   O(\log n)}$. Hence, by the security of the interactive 
hashing protocol, the probability that both $w_0$ and $w_1$ are good for either $\mathcal{R}_0$ or $\mathcal{R}_1$
is a negligible function of $n$, and so with overwhelming probability one of the sets (w.l.o.g. $\mathcal{R}_0$) will 
have $u(\mathcal{R}_0^{\mathcal{T}_{\overline a}}) \geq \alpha n$. 
 
By lemma~\ref{lem:sub} (in the appendix), if two $n$ long strings are not jointly typical at a uniformly randomly chosen 
linear fraction of positions, then these $n$ long strings are not jointly typical. Hence Bob can only successfully pass the test
performed by Alice in the Checking the Partitioning step (i.e., he can only find $y^{\mathcal{R}_0^{\mathcal{T}_{\overline a}}}$ 
that is jointly typical with Alice's input) if he can correctly guess $y$'s values for the erasure positions that are jointly typical with
Alice's input on these positions. For $\epsilon > 0$ and $n$ sufficiently large, there are for these positions at most 
$2^{\alpha n [H(Y \in \mathcal{Y}_0|X)+ \epsilon]}$ sequences of $y$'s values that are jointly typical with Alice's input, 
and there are at least $2^{\alpha n [H(Y \in \mathcal{Y}_0)- \epsilon]}$ typical sequences for the $y$'s values, thus
Bob's success probability is less than $2^{\alpha n [H(Y \in \mathcal{Y}_0|X)- H(Y \in \mathcal{Y}_0)+ 2\epsilon]}=2^{-\alpha n [C(W_0)-2 \epsilon]}$, 
which is a negligible function of $n$. Since Bob can only cheat with negligible probability in the case that 
both $u(\mathcal{R}_0), u(\mathcal{R}_1) \geq 2 \alpha n$, the protocol is secure for Alice in this case.

\paragraph*{Case 2} We assume w.l.o.g. that $\mathcal{R}_0$ is the one with $u(\mathcal{R}_0)<2 \alpha n$. 
The Chernoff bound guarantees that $|\mathcal{B}|>(p^*-\alpha)n$ with overwhelming probability. If 
$\mathcal{T}_a$ is bad for $\mathcal{R}_{1}$, then, by the same reasons as above, we have 
that Bob can only successfully pass the test performed by Alice in the Checking the Partitioning step (i.e., 
finding $y^{\mathcal{R}_1^{\mathcal{T}_a}}$ that is jointly typical with Alice's input) with negligible probability. 
But if $u(\mathcal{R}_0)<2 \alpha n$, $u(\mathcal{R}_{1}^{\mathcal{T}_a}) < \alpha n$ and $|\mathcal{B}|>(p^*-\alpha)n$,
then $u(\mathcal{Q}_1) \geq (p^*-4\alpha)n$. Then from Bob's point of view, at least $(p^*-4\alpha)n=(\mu - 5 \alpha)n$ of the positions
in $\mathcal{Q}_1$ are erasures and Alice only sends him $\mu n[H(X|Y \in \mathcal{Y}_0)+\epsilon]$ bits of information 
about $x^{\mathcal{Q}_1}$. Hence 
$H_\infty(X^{\mathcal{Q}_1}|\mathrm{View_{Bob}})>n[(\mu - 5 \alpha)H(X)-\mu H(X|Y \in \mathcal{Y}_0) - \mu \epsilon]$ and 
so the use of the 2-universal hash function $h_1$ for extracting $n[(\mu - 5 \alpha)H(X)-\mu H(X|Y \in \mathcal{Y}_0) - \mu \epsilon -  \gamma]$
bits is secure according to the Leftover-Hash Lemma. Therefore the protocol is secure for Alice in this case as well.

\paragraph*{Maximizing the oblivious transfer rate} For $n$ sufficiently large, $\alpha$, $\epsilon$ and $\gamma$ 
can be made arbitrarily small without compromising the security, thus in the limit the 
strings' length can be up to $n p^* [H(X)-H(X|Y \in \mathcal{Y}_0)]$. Since the probability distribution
used for $X$ is the one achieving the Shannon capacity of $W_0$, this is equal to $n p^* C(W_0)$, thus 
proving Theorem~\ref{the:main}.

\section{Conclusions}

In this work it was proven that the known lower bound in case of passive adversaries for the oblivious transfer capacity 
of the generalized erasure channels with error probability $p^* < 1/2$  also holds in the case of malicious 
adversaries, which can deviate arbitrarily from the protocol. In order to prove this result, a novel usage of the interactive hashing 
technique suitable for channels with low erasure probability was established, which can be of interest in other 
scenarios. The question of determining the exact oblivious transfer capacity of the generalized erasure 
channels with low erasure probability remains open, even for passive adversaries, and would be an interesting direction 
for future research given the pivotal role of these channels in the known constructions of oblivious transfer from noisy channels. 
Another interesting line of research would be developing new methodologies for obtaining oblivious transfer 
from noisy channels which circumvent the need of emulating a generalized erasure channel as a first step.

\bibliographystyle{plain}

\appendix

\section {Typical Sequences}

The following definitions follow largely the book of Csisz\'{a}r and K\"{o}rner \cite{CsiKor82}.

\begin{definition}For a probability distribution $P$ on $\mathcal{X}$ and $\epsilon > 0$ the \emph{$\epsilon$-typical sequences} form the set
\begin{eqnarray*}
\mathcal{T}_{P,\epsilon}^{n} = \{ x^n \in \mathcal{X}^n : \forall x \in \mathcal{X}~ |N(x|x^n) - nP(x)| \leq \epsilon n~\&~\\
P(x)=0 \Rightarrow N(x|x^n)=0 \},
\end{eqnarray*}
with the number $N(x|x^n)$ denoting the number of symbols $x$ in the string $x^n$. 
\end{definition}

The \emph{type} of $x^n$ is the probability distribution $P_{x^n}(x) = \frac{1}{n}N(x|x^n)$. Then, $x^n \in \mathcal{T}_{P,\epsilon}^{n} \Rightarrow |P_{x^n}(x) - P(x)| \leq  \epsilon, \forall x\in \mathcal{X}$.

\begin{prop}
\mbox{}
	\begin{enumerate}
	\item $P^{\otimes n}(\mathcal{T}_{P,\epsilon}^{n})\geq 1 - 2|\mathcal{X}| \exp(-n\epsilon^2/2)$.
	\item $ |\mathcal{T}_{P,\epsilon}^{n}| \leq  \exp(nH(P)+n\epsilon D) $.
	\item $ |\mathcal{T}_{P,\epsilon}^{n}| \geq  (1 - 2|\mathcal{X}| \exp(-n\epsilon^2/2)) \exp(nH(P)-n\epsilon D) $,
	\end{enumerate}
	with the constant $D = \sum_{x:P(x)\neq 0} -\log P(x)$. See \cite{CsiKor82} for more details.
\end{prop}

Extending this concept to the conditional $\epsilon$-typical sequences, we have:
\begin{definition}\label{cts}
Consider a channel $W: \mathcal{X}\rightarrow \mathcal{Y}$ and an input string $x \in \mathcal{X}^n$. For $\epsilon > 0$, the 
\emph{conditional $\epsilon$-typical sequences} form the set
\begin{eqnarray*}
\mathcal{T}_{W,\epsilon}^{n}(x^n) & = & \{ y^n : \forall x \in \mathcal{X},y \in \mathcal{Y}~ |N(xy|x^n y^n)\\
&  &-nW(y|x)P_{x^n}(x)| \leq \epsilon n \\
&  & \& \, W(y|x)=0 \Rightarrow N(xy|x^n y^n)=0 \} \\
 & = & \prod_{x} \mathcal{T}_{W_x,\epsilon P_{x^n}(x)^{-1},}^{\mathcal{I}_x}
\end{eqnarray*}
 where $\mathcal{I}_x$ are the sets  of positions in the string $x^n$ where $x_k = x$.
\end{definition}
 
\begin{prop}\label{cts_prop}
\mbox{}
	\begin{enumerate}
	\item \label{pr} $W_{x^n}^{n}(\mathcal{T}_{W,\epsilon}^{n})\geq 1 - 2|\mathcal{X}||\mathcal{Y}| \exp(-n\epsilon^2/2)$.
	\item $|\mathcal{T}_{W,\epsilon}^{n}| \leq \exp(nH(W|P_{x^n})+n\epsilon E)$.
	\item $|\mathcal{T}_{W,\epsilon}^{n}| \geq \left(1 - 2|\mathcal{X}||\mathcal{Y}| \exp(-n\epsilon^2/2) \right)$\\ $\cdot\exp(-nH(W|P_{x^n})-n\epsilon E)$,
	\end{enumerate}
	with the constant $E = \mbox{max}_x \sum_{y:W(y)\neq 0} -\log W_x(y)$ and the conditional entropy $H(W|P) = \sum_x P(x)H(W_x)$. See \cite{CsiKor82} for more details.
\end{prop}

It is a well know fact that if $x^n$ and $y^n$ are conditional $\epsilon$-typical according the definition \ref{cts}, then 
\[
|\mathcal{T}_{W,\epsilon}^{n}| \leq  2^{n(H(Y|X)+\epsilon)}.
\]

We now prove the following lemma:

\begin{lemma}\label{lem:sub}
Let $W: \mathcal{X}\rightarrow \mathcal{Y}$ be a discrete memoryless channel and $x^n \in \mathcal{X}^n$, $y^n \in \mathcal{Y}^n$ be the input and output strings of this channel. Let $\mathcal{A}$ be a random subset of $[n]$ such that $|\mathcal{A}| = \delta n$, $0< \delta \leq 1$. Let $x^{\mathcal{A}}$ and $y^{\mathcal{A}}$ be the restrictions of $x^n$ and $y^n$ to the positions in the set $\mathcal{A}$. If $x^n$ and $y^n$ are conditional $\epsilon$-typical, then $x^{\mathcal{A}}$ and $y^{\mathcal{A}}$ are conditional $2 \epsilon$-typical for any $\epsilon>0$ and $n$ large enough.
\end{lemma}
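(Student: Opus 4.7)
The plan is to reduce the conclusion to two concentration facts about the random subset $\mathcal{A}$: namely, that for every letter $x \in \mathcal{X}$ and every pair $(x,y) \in \mathcal{X} \times \mathcal{Y}$ the counts $N(x|x^\mathcal{A})$ and $N(xy|x^\mathcal{A} y^\mathcal{A})$ are close to their expected values $\delta N(x|x^n)$ and $\delta N(xy|x^n y^n)$. Combining these with the conditional $\epsilon$-typicality of $(x^n,y^n)$, which I first rewrite in the equivalent form $|N(xy|x^n y^n) - W(y|x)N(x|x^n)| \leq \epsilon n$ (using $nP_{x^n}(x)=N(x|x^n)$), will give the target bound at scale $\delta n$ with slack $2\epsilon$. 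The support condition $W(y|x)=0 \Rightarrow N(xy|x^\mathcal{A} y^\mathcal{A})=0$ carries over for free since restricting to a subset cannot create symbols that are absent in $x^n y^n$.

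First, I apply a Hoeffding-type inequality for sampling $\delta n$ items without replacement (Serfling's inequality) to each indicator sequence $\mathbf{1}\{x_i=x\}$ and $\mathbf{1}\{(x_i,y_i)=(x,y)\}$ for $i \in [n]$. This yields, for any fixed deviation parameter $\eta > 0$,
\[
\Pr\bigl[|N(x|x^\mathcal{A}) - \delta N(x|x^n)| > \eta n\bigr] \leq 2\exp(-c\eta^2 n)
\]
and an analogous bound for $N(xy|x^\mathcal{A} y^\mathcal{A})$, for some universal constant $c > 0$. A union bound over the $|\mathcal{X}|+|\mathcal{X}||\mathcal{Y}|$ events shows that, for $n$ large enough (depending only on $\eta$, $|\mathcal{X}|$, $|\mathcal{Y}|$), a single random $\mathcal{A}$ simultaneously satisfies every such estimate with overwhelming probability.

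Second, I combine algebraically by writing
\[
N(xy|x^\mathcal{A} y^\mathcal{A}) - W(y|x)\,N(x|x^\mathcal{A}) = \delta\bigl[N(xy|x^n y^n) - W(y|x)\,N(x|x^n)\bigr] + E_{xy} - W(y|x)\,E_x,
\]
where $E_{xy}$ and $E_x$ are the sampling deviations, each bounded in absolute value by $\eta n$ on the high-probability event above. The bracketed term is bounded by $\epsilon n$ by hypothesis, so its $\delta$-multiple contributes at most $\delta \epsilon n$, and the remaining errors contribute at most $2\eta n$ (using $W(y|x)\leq 1$). Choosing $\eta \leq \epsilon\delta/2$ makes the total at most $2\epsilon\delta n$, which is exactly the conditional $2\epsilon$-typicality bound at length $\delta n$, since $(\delta n)P_{x^\mathcal{A}}(x) = N(x|x^\mathcal{A})$.

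The essentially only obstacle is invoking the right concentration inequality: because $\mathcal{A}$ is a uniformly random subset of fixed size rather than a collection of i.i.d.\ samples, the standard Chernoff bound does not directly apply, but Serfling's inequality for sampling without replacement provides an exponential tail bound of the same quality on the sum of indicators. Once that is available, the rest is bookkeeping to verify that the slack $2\epsilon\delta n$ absorbs both the scaled hypothesis error $\delta\epsilon n$ and the $2\eta n$ sampling deviations, which is precisely why the target constant is $2\epsilon$ rather than $\epsilon$.
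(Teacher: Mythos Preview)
Your argument is correct and follows the same overall route as the paper: apply a concentration inequality to the empirical counts in the random subset $\mathcal{A}$ and combine with the $\epsilon$-typicality hypothesis to get $2\epsilon$-typicality at scale $\delta n$. Two points of difference are worth noting. First, the paper simply invokes ``the Chernoff bound'' for the count $N(xy|x^{\mathcal{A}}y^{\mathcal{A}})$, whereas you correctly flag that $\mathcal{A}$ is drawn without replacement and appeal to Serfling's inequality; this is the more honest citation, though Hoeffding's classical observation that sampling without replacement only tightens the tails means the paper's shortcut is harmless. Second, and more substantively, the paper bounds $N(xy|x^{\mathcal{A}}y^{\mathcal{A}})$ against $\delta n\,P_{x^n}(x)W(y|x)$ and then declares $2\epsilon$-typicality, silently identifying $P_{x^n}$ with $P_{x^{\mathcal{A}}}$; your decomposition, which separately controls the deviation $E_x = N(x|x^{\mathcal{A}}) - \delta N(x|x^n)$ and folds it into the error budget via the term $W(y|x)E_x$, actually proves the inequality with the correct type $P_{x^{\mathcal{A}}}$ appearing in the definition. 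So your version is essentially the paper's proof done with a bit more care.
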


\begin{proof} 

By hypothesis $x^n$ and $y^n$ are conditional $\epsilon$-typical, so for every symbols $x$ and $y$ we have that
$$
\left|N(xy|x^{n}y^{n}) - n P_{x^{n}}(x) W(y|x)\right| \leq \epsilon n,
$$
for a large enough $n$.

Given the conditional $\epsilon$-typical strings $x^n$ and $y^n$, the probability of selecting one pair with the specific values $x$ and $y$ for the substrings $x^{\mathcal{A}}$ and $y^{\mathcal{A}}$ is $\frac{N(xy|x^{n}y^{n})}{n} $. We have that
$$
P_{x^{n}}(x) W(y|x)-\epsilon \leq \frac{N(xy|x^{n}y^{n})}{n} \leq P_{x^{n}}(x) W(y|x)+\epsilon.
$$
Therefore, by the Chernoff bound~\cite{Chernoff52}, for $n$ large enough with overwhelming probability the number of pairs of $x$ and $y$ in the substrings $x^{\mathcal{A}}$ and $y^{\mathcal{A}}$, $N(xy|x^{\mathcal{A}}y^{\mathcal{A}})$, is limited by
\begin{eqnarray*}
\delta n \left(P_{x^{n}}(x) W(y|x)-\epsilon-\epsilon'\right) \leq N(xy|x^{\mathcal{A}}y^{\mathcal{A}}) &~&\\
\leq \delta n \left(P_{x^{n}}(x) W(y|x)+\epsilon+\epsilon'\right),
\end{eqnarray*}
for any $\epsilon' > 0$. Making $\epsilon ' = \epsilon$ we have that the substrings $x^{\mathcal{A}}$ and $y^{\mathcal{A}}$ are conditional $2 \epsilon$-typical.
\end{proof}

\end{document}